\newtheorem{theorem}{Theorem}
\newtheorem{lesson}{Lesson}
\newtheorem{problem}{Problem}
\newtheorem{corollary}{Corollary}
\newtheorem{consequence}{Consequence}
\newtheorem{definition}{Definition}
\newtheorem{counterexample}{Counterexample}
\newtheorem{observation}{Observation}
\newtheorem{question}{Question}
\def\({\big(}
\def\){\big)}
\newcommand{\tn}{\textnormal}
\newcommand{\hilbert}{\mathcal{H}}
\newcommand{\mc}[1]{\mathcal{#1}}
\newcommand{\wh}[1]{\widehat{#1}}
\newcommand{\wt}[1]{\widetilde{#1}}
\newcommand{\oper}[1]{\wh{\mathbf{#1}}}
\newcommand{\dofs}{d.o.f.s\xspace}
\newcommand{\vs}{\textit{vs.}\ }
\newcommand{\eg}{\textit{e.g.}\ }
\newcommand{\etc}{\textit{etc}}
\newcommand{\schrod}{Schr\"odinger}
\newcommand{\ket}[1]{|#1\rangle}
\newcommand{\braket}[2]{\langle#1|#2\rangle}
\newcommand{\q}{\boldsymbol{q}}
\newcommand{\p}{\boldsymbol{p}}
\newcommand{\operq}{\oper{q}}
\newcommand{\operp}{\oper{p}}
\def\sref #1{\S\ref{#1}}
\begin{document}

\title{Sentient observers and the ontology of spacetime}
\author{Ovidiu Cristinel Stoica\ \orcidlink{0000-0002-2765-1562}
\thanks{Dept. of Theoretical Physics, NIPNE---HH, Bucharest, Romania. \\
	Email: \href{mailto:cristi.stoica@theory.nipne.ro}{cristi.stoica@theory.nipne.ro},  \href{mailto:holotronix@gmail.com}{holotronix@gmail.com}}}
\date{\today}

\maketitle

\begin{abstract}
I show that, by the same criteria that led to Galilean and Special Relativity and gauge symmetries, there is no way to identify a unique set of observables that give the structure of space or spacetime. In some sense, space is lost in the state space itself. Moreover, the relationship between the observables and the physical properties they represent becomes relative. But we can verify that they are not relative, and the spacetime structure is unique. I show that this implies that not all structures isomorphic with observers can be observers, contradicting Structural Realism and Physicalism. This indicates a strong connection between spacetime and the sentience of the observers, as anticipated by some early contributors to Special and General Relativity.
\end{abstract}

\section{Introduction}
\label{s:intro}

In a quantum world, space is identifiable with a structure associated with the position observables. However, due to the huge symmetry of the state space, this identification cannot be done uniquely based only on the structures and relations. Observers are needed to interpret which of the observables represent positions. But the same symmetry of the state space implies that, for any observer, there are infinitely many structures identical to the observer's structure, and all of them would identify as position observables different isomorphic structures. From the point of view of Structural Realism, there is no preferred choice, leading to a principle of ``meta-relativity'' at the level of the state space itself. Then what breaks this huge unitary symmetry, reducing it to the space or spacetime symmetries? I prove that only observer-like structures associated with a particular choice of spacetime can be observers, otherwise their brains wouldn't be able to contain reliable information about the external world. All other observer-like structures can only be philosophical zombies. This has the surprising implication that the ontology of spacetime coincides with that of consciousness, refuting those theories of mind in which consciousness is reducible to structural or relational aspects of the observer's brain.

In Section \sref{s:lessons-relativity}, I revisit the lessons learned from the Principle of Relativity.
In Section \sref{s:lessons-quantum} I show that these lessons don't extend to the state space, because there are many structures isomorphic with spacetime, and there is no unambiguous way to deduce the physical meaning of the physical properties from the observables representing them.
In Section \sref{s:observers} I show that to solve these ambiguities and give meaning of the observables, observers are required.
But if we define observers by their structure and dynamics only, the ambiguity extends to the observers. In Section \sref{s:ambiguity-structural-realism} I show that the fact that we can know the physical properties of the external world refutes Structural Realism.
In Section \sref{s:sentient-spacetime} I explain the relation between sentient observers and spacetime.
In Section \sref{s:physicalism-vs-observers} I explain why this refutes Physicalism, and how several thinkers, including some early contributors to Special and General Relativity, anticipated the relation between sentience and spacetime.

\section{Lessons from the Principle of Relativity}
\label{s:lessons-relativity}

We use coordinate systems to prove theorems in Euclidean geometry since antiquity \cite{MerzbachBoyer2011AHistoryOfMathematics}. This, along with the development of cartography and the lessons on perspective learned by artists, revealed to us that geometric properties are independent of coordinates. Coordinates are very useful conventions, but they are conventions. The usage of coordinate systems to express geometric properties should be done with discernment, to make sure we don't take the conventional, the relative, as an intrinsic truth.

Gradually, we arrived at the discovery of symmetry groups and invariance.
It's not an overstatement to say that this revelation culminated with Felix Klein's \emph{Erlangen Program} \cite{klein1872ErlangenProgram}.
We finally came to the realization that geometry is the study of symmetry transformations. By choosing different symmetry groups and group representations, we can classify the previously known geometries and ``predict'' new ones, similar to how Mendeleev predicted new types of atoms.

Returning to Euclidean geometry, it applies to any structure that satisfies its axioms. It applies to both absolute and relative spaces. Geometry by itself is unable to distinguish between these two cases, because even if space were absolute, only its relational properties are captured by geometry.
It could be the case that the ontology of space has a way to distinguish a special point, an origin, or a special direction, or three special orthogonal directions that would give a preferred basis. But Euclidean geometry is blind to these possible additional properties.
In terms of forgetful functors between categories, the Euclidean space is an inner product vector space that forgot its origin, and also the space of triples from $\mathbb{R}^3$ that forgot the coordinate system, and so on. What do we learn from this?
\begin{lesson}
\label{lesson:euclid}
Even if space were absolute, anisotropic, with a preferred origin, or any other preferred structure, if all we can measure are distances and angles, we can treat it as relative.
\end{lesson}

Moving through space takes time, so in mechanics positions and angles can change with time.
This requires an enlargement of the symmetry group used in Euclidean geometry, and also of the space on which it acts, leading to Galileo's group. Galileo's group is also captured by the Erlangen program, but now it acts on four dimensions, three of space and one of time.

Newton believed that space and time are absolute, whereas Leibniz believed that they are relative. Newton's motivation was to find something absolute in the world, something that testifies for the existence of God. Ironically, Leibniz's motivation is related to his theological views.
But probably both agreed that Physics is about the truths that ignore the absolute space and time, whether or not they are absolute.
In Newtonian Physics, the truth about the nature of space and time remains transcendent. So, there is still peace between Newton and Leibniz, in the common ground provided by Galileo's Principle of Relativity. From this, we learn:
\begin{lesson}
\label{lesson:galileo}
Newtonian Mechanics is independent of whether space and time are relative or absolute, and of their ontology.
\end{lesson}

But Special Relativity revealed that, if we ignore gravity, the right extension of the Euclidean group is not Galileo's group, but the Poincar\'e group.
Space and time are not absolutely separated, the correct extended geometry is Minkowski's.

Minkowski's arguments that spacetime is relative, or better, relational, are right, of course.
But this won't stop people who want them to be absolute for various reasons to continue to believe this. For example, someone who needs ``spooky action at a distance'' (at the ontological level and not just in correlations) in their preferred interpretation of Quantum Mechanics may say that there is an absolute foliation of spacetime. A preferred foliation is required for example by the \emph{pilot-wave theory} \cite{Bohm1952SuggestedInterpretationOfQuantumMechanicsInTermsOfHiddenVariables} and the \emph{objective collapse theories} \cite{GhirardiRiminiWeber1986GRWInterpretation}. It's just that this foliation doesn't transpire in classical experiments, and neither in the quantum ones for that matter, but it can always be claimed that it's there, in the ontology that gives us local beables \cite{Bell2004SpeakableUnspeakable,Maudlin2019PhilosophyofPhysicsQuantumTheory}.
And when we're talking only about ontology, but nothing that can show up in experiments or be falsified, it's hard to argue. So we have to admit this as a possibility, no matter how dim it is:
\begin{lesson}
\label{lesson:minkowski}
Einstein's Special Relativity and the underlying Minkowski geometry are independent of whether space and time are absolute or relative.
\end{lesson}

Of course, this independence can be understood simply as saying that space and time are relative, since nothing contradicts this. But they could be absolute, and the equations and our measurements could still be the same, conspiring somehow to conceal the absoluteness of space and time.

The unification of space and time resulted in more evidence for relativity: the unification of other apparently separate things, like the momentum $3$-vector and energy as the components of a $4$-vector (with mass as the invariant length of this $4$-vector), the electric $3$-vector field and the magnetic $3$-pseudovector field as components of an antisymmetric $4$-tensor, and so on.

Things don't stop here, since the same symmetry principles, promoted from global to local, shape other sectors of physics.
Consider an extension of spacetime, whose extra dimensions are invisible not necessarily because they're curled in a too small compact circle or manifold, but because nothing changes in these directions.
The internal geometry of these extra dimensions has its own symmetry group -- the gauge group. 
Add a \emph{connection} -- a way to transport tangent vectors in a parallel way, so that we can define derivatives of the vector fields.
The symmetries of such a geometry are local gauge transformations.
The internal space is relative, so there is no absolute gauge (that is, no basis in the internal space), there are no absolute potentials. This didn't change with the discovery of the Aharonov-Bohm effect, despite a widespread view that it needs absolute potentials, because this effect is explained by connections or by the difference between potentials, and these are gauge invariant.
But, again, we can't disprove those thinking that potentials are fundamental for whatever metaphysical reasons or simply by not taking the geometric description seriously.
So, to indulge them too,
\begin{lesson}
\label{lesson:gauge}
Gauge theory is the same whether or not there is an absolute but unobservable choice of the gauge.
\end{lesson}

Let's find the common denominator of these lessons. We've seen, through the Erlangen Program and its local extension to gauge theory, that symmetry plays a central role in all of these lessons. Putting them all together,
\begin{lesson}
\label{lesson:symmetry}
The symmetries of the physical laws may be broken by the ontology, but if this doesn't transpire in the observations, it has no place in our descriptions of the laws.
\end{lesson}

Of course, if the ontology breaks the symmetry of the laws in a way that is manifest in the experiments, it means we have to update the theory by replacing the laws with those satisfying the broken symmetry.
Lesson \ref{lesson:symmetry} refers to the situation when no experiment contradicts the physical law.

Moreover, since the symmetry groups admit more representations, they can explain more things that we initially expected, in a unified way.
The Poincar\'e symmetry allows us to classify particles, and their free evolution equations, by spin and mass \cite{Wigner1959GroupTheoryAndItsApplicationToTheQuantumMechanicsOfAtomicSpectra,Bargmann1964NoteOnWignerTheoremSymmetryOperations}.
Other numbers associated with particles appear as the invariants of the gauge symmetries, giving us a more detailed classification of particles. This is how hadrons were classified, and new quarks and leptons were predicted, and later discovered \cite{georgi1982lie,MJDHamilton2017MathematicalGaugeTheoryStandardModel}.
This leads us to another lesson:
\begin{lesson}
\label{lesson:symmetry-central}
Symmetry plays a central role in physics. If not a fundamental role, at least a very efficient systematizing one.
\end{lesson}

These lessons add-up to a lesson in epistemic humility: 
\begin{lesson}[Structural Realism]
\label{lesson:SR}
Experiments and inferences from experiments give us access to the relations, but not to the true nature of the relata.
\end{lesson}

This may surprise those who expected that science gives us access to reality itself.
But let's revise what measurements do: they compare sizes with some reference sizes. For example, they compare lengths or distances with some unit lengths, and this is true for any measurement unit.
Experiments may also consist in counting, for example we can count how many particles of each type resulted from a collision. Or they may consist of detecting if an event happened or not. But in all cases, all we get are relations, not the nature of the relata.
So the fair and epistemically modest position is that of \emph{(Epistemic) Structural Realism} \cite{sep-structural-realism}: Science tells us about relations, but not about the nature of the relata.

And these relations are mathematizable, in fact mathematical structures are nothing but relations \cite{Gratzer2008UniversalAlgebra}. Counting particles gets us numbers. Ratios are numbers. We can also obtain non-numerical results, for example topological ones, but whatever relations there are, they turn out to be describable by mathematics. This makes mathematics very effective in physics, and from Lesson \ref{lesson:symmetry-central}, this effectiveness is highly amplified by the power of  symmetries.

From the raw data collected from experiments, we can guess more than just seems to be out there. But we can't read unequivocally what the universal laws are from these data, not even if there are such laws. Our theoretical models are a guesswork, we postulate general rules that fit the data, and then we make more experiments to falsify or corroborate these rules.
And this seems to work well.

\section{Do these lessons extend to the quantum world?}
\label{s:lessons-quantum}

The lessons from the previous section should apply to the quantum world as well.
Otherwise we could find quantum effects that break them. This not only didn't happen, but, as the classification of particles by the invariants of the representations of the symmetry groups showed, it powered Quantum Theory. In fact, the main ingredient of Quantum Field Theory (QFT) is the Poincar\'e symmetry, leading, together with the gauge symmetries, to the entire particle and interaction zoo \cite{Weinberg2005QuantumTheoryOfFields}.

One may think that, due to entanglement, relativity of simultaneity is violated, especially in the Einstein-Podolsky-Rosen experiment \cite{EPR35,Bohm1951TheParadoxOfEinsteinRosenAndPodolsky,Bell2004SpeakableUnspeakable}.
But in fact, no matter how entangled, the wavefunction is how it is due to the Poincar\'e symmetry itself \cite{Stoica2021WhyTheWavefunctionAlreadyIsAnObjectOnSpace}, and can be understood as an object on space, not just on the configuration space \cite{Stoica2019RepresentationOfWavefunctionOn3D,Stoica2023TheRelationWavefunction3DSpaceMWILocalBeablesProbabilities}. There is no ``spooky action at a distance'' \cite{Stoica2019RepresentationOfWavefunctionOn3D,Stoica2023TheRelationWavefunction3DSpaceMWILocalBeablesProbabilities}, unless it makes its way in, either by the wavefunction collapse, as in the objective collapse theory \cite{GhirardiRiminiWeber1986GRWInterpretation}, or by non-local interactions, as in the pilot-wave theory \cite{Bohm1952SuggestedInterpretationOfQuantumMechanicsInTermsOfHiddenVariables}.
But this may happen in a way that is not manifest in our experiments, and this is what Lesson \ref{lesson:minkowski} is about.
However, these interpretations weren't so far able to explain simple quantum phenomena beyond non-relativistic quantum mechanics with a fixed number of particles, let alone to produce a quantum theory of fields, despite decades of attempts \cite{Wallace2023TheSkyIsBlueAndOtherReasonsQuantumMechanicsIsNotUnderdeterminedByEvidence}. On the other hand, the Poincar\'e symmetry allowed the discovery of the first working formulations of relativistic quantum mechanics in a couple of years \cite{Dirac1928QuantumTheoryOfElectron} and of QFT during the next couple of decades. In fact it enforced it upon us.

However, we will soon see that, despite Lesson \ref{lesson:symmetry} and other lessons, the very existence of space or spacetime requires a symmetry breaking of the very large symmetry group of the state space, a symmetry breaking that neither the quantum structures nor their dynamics provide it!

To see this, recall that a quantum world is described by a unit vector $\ket{\psi(t)}$ in a Hilbert space $\hilbert$ (the state space).
The state vector $\ket{\psi(t)}$ evolves unitarily, governed by a {\schrod}-type equation
\begin{equation}
\label{eq:evol}
\ket{\psi(t)}=e^{-\frac{i}{\hbar}\oper{H}t}\ket{\psi(0)},
\end{equation}
where the Hamiltonian operator $\oper{H}$ is a Hermitian operator.

The Hilbert space is endowed with a \emph{tensor product structure} -- a decomposition as a tensor product of the Hilbert spaces for the elementary particles.
This tensor product structure is known to us from the empirical data, since it's determined by observables \cite{ZanardiLidarLloyd2004QuantumTensorProductStructuresAreObservableInduced}.

The observables represent physical properties, so they have associated physical meanings. The physical meaning of each observable follows from experiments.
The observables are represented by Hermitian operators on $\hilbert$.
The observables pertaining to a subsystem are also represented by Hermitian operators on $\hilbert$, with the additional constraint that they are independent of the states of other subsystems.

Among the observables, the position observables $\hat{x}$, $\hat{y}$, $\hat{z}$ are more fundamental, in the following sense. First, they are related to space. Second, the momenta $\hat{p}_x:=-i\hbar\frac{\partial}{\partial x}$, $\hat{p}_y:=-i\hbar\frac{\partial}{\partial y}$, $\hat{p}_z:=-i\hbar\frac{\partial}{\partial z}$ are defined as partial derivatives with respect to the spectra of the positions. Third, all other observables (except for the spin observables and the observables corresponding to internal degrees of freedom, which are independent of positions and momenta) are functions of the position and the momentum observables.

And here we have a problem. 

\begin{problem}
\label{problem:observables}
There is no unique way to identify the subsystems and the observables -- for example the position observables -- only from the mathematical structure of Quantum Theory.
\end{problem}

This problem is not noticed in the literature, so I will have to explain it and convince the reader of its existence and importance.

Suppose we have a complete set of commuting observables.
It can be chosen to contain the position observables of all particles.
For $n$ particles the position observables are
\begin{equation}
\label{eq:q-obs}
\operq:=\(\hat{q}_1,\hat{q}_2,\ldots,\hat{q}_{3n}\)=\(\underbrace{\hat{x}_1,\hat{y}_1,\hat{z}_1}_{\tn{particle }1},\underbrace{\hat{x}_2,\hat{y}_2,\hat{z}_2}_{\tn{particle }2},\ldots,\underbrace{\hat{x}_n,\hat{y}_n,\hat{z}_n}_{\tn{particle }n}\).
\end{equation}

Other observables, for spin and charges of the internal symmetries, are needed to obtain a CSCO, and may be included as additional observables. For simplicity I will ignore them, but we should remember that they exist and can be added at any time, as needed.

Let's collect the spectra of the operators $\hat{q}_j$ in a manifold $\mc{C}_{\operq}$, named in the following position \emph{parameter space}. Let $\q\in\mc{C}_{\operq}$, where
\begin{equation}
\label{eq:q-par}
\q=\(q_1,q_2,\ldots,q_{3n}\).
\end{equation}
The spin and internal \dofs can be included as additional parameters, or used to label different connected components of the parameter space $\mc{C}_{\operq}$.

Then, the wavefunction of the $n$ particles, with respect to the position basis, is obtained from the state vector $\ket{\psi}$ by
\begin{equation}
\label{eq:wf}
\psi(\q)=\braket{\q}{\psi}.
\end{equation}

The momentum observables are
\begin{equation}
\label{eq:p-obs}
\operp:=\(\hat{p}_1,\hat{p}_2,\ldots,\hat{p}_{3n}\).
\end{equation}

The momentum parameter space $\mc{C}_{\operp}$ is obtained by Fourier transforming the position space, and keeping the spin and internal \dofs,
\begin{equation}
\label{eq:p-par}
\p=\(p_1,p_2,\ldots,p_{3n}\).
\end{equation}

All other observables are functions of the position and momentum operators, $\oper{f}\(\operq,\operp\)$.

For a variable number of particles, as in the Fock representation in Quantum Field Theory, the manifolds $\mc{C}_{\operq}$ and $\mc{C}_{\operp}$ are the unions of the manifolds for fixed numbers of particles.

Evidently, the symmetries of space allow different choices of the axes and the origins determining the position observables. Changing the origin amounts to shifting the eigenvalues, and changing the axes amounts to replacing the position observables with linear combinations of the old position observables.
Similar for the momentum observables, whose eigenvalues depend of the relative motion of the inertial reference frames.
In the following we will assume that these equivalences go without saying.

Suppose we have identified what all observables mean physically.
If we apply a unitary transformation $\oper{S}$ to the observables, we obtain another set of observables.
If the unitary transformation doesn't correspond to a change of space coordinates, the resulting operators can't represent positions.

But any two observables $\oper{S}\hat{q}_j\oper{S}^\dagger$ and  $\oper{S}\hat{p}_k\oper{S}^\dagger$ are in the same relation with one another as the original $\hat{q}_j$ and  $\hat{p}_k$.
And for any observable defined as a function $\oper{f}(\operq,\operp)$, 
$\oper{S}\oper{f}(\operq,\operp)\oper{S}^\dagger$ is in the same relation with $\oper{S}\operq\oper{S}^\dagger$ and $\oper{S}\operp\oper{S}^\dagger$ as the original $\oper{f}(\operq,\operp)$ is with $\operq$ and $\operp$.

Then how can we distinguish, from the structural relations only, the observables $\operq$, $\operp$, and the other observables $\oper{f}(\operq,\operp)$, from any other choice of observables $\oper{S}\operq\oper{S}^\dagger$, $\oper{S}\operp\oper{S}^\dagger$, and $\oper{S}\oper{f}(\operq,\operp)\oper{S}^\dagger$?
It is not sufficient to merely call $\operq$ but not other $\oper{S}\operq\oper{S}^\dagger$ positions, since the relational structure gives all of these choices equal footing.
The operators $\oper{S}\operq\oper{S}^\dagger$ and $\oper{S}\operp\oper{S}^\dagger$ are functions of $\operq$ and $\operp$ as well, so they are as real as any operator $\oper{f}(\operq,\operp)$.
But what if the position observables $\operq$ and the transformed ones $\oper{S}\operq\oper{S}^\dagger$ have completely different nature, different ontology? Again, our lessons taught us that this is irrelevant to the observations. Only relations are supposed to matter.

We arrived at what seems to be a counterexample to the Lessons from Section \sref{s:lessons-relativity}, particularly Lesson \ref{lesson:SR} (Structural Realism):

\begin{counterexample}[To Structural Realism]
\label{counterexample:SR}
Something has to break the symmetry of the quantum structure, resulting in a preferred choice of the position observables.
\end{counterexample}

But if not the structure, then what makes the position observables special compared to other observables?
Even if the unitary transformation $\oper{S}$ commutes with the Hamiltonian, so that the dynamics has the same form as it does in the position basis $\(\ket{\q}\)_{\q\in\mc{C}_{\operq}}$, we arrived at the following question:
\begin{question}
\label{question:whence-space}
What gives the position observables $\operq$ their preferred role among all observables of the form $\oper{S}\operq\oper{S}^\dagger$, even assuming that $\oper{S}\oper{H}=\oper{H}\oper{S}$?
\end{question}

Some researchers believe that the space structure emerges uniquely from the Hamiltonian $\oper{H}$ in its basis-independent form, that is, from its spectrum (including multiplicities).
This is supposed to proceed in steps: first obtain the tensor product structure that corresponds to regions of space, then use additional information from the state vector to obtain distances \cite{CarrollSingh2019MadDogEverettianism,Carroll2021RealityAsAVectorInHilbertSpace}.
The tensor product structure is supposed to emerge, at least for finite-dimensional Hilbert spaces, from the spectrum and a condition of $k$-locality \cite{CotlerEtAl2019LocalityFromSpectrum}.
But in \cite{Stoica2022SpaceThePreferredBasisCannotUniquelyEmergeFromTheQuantumStructure} it was shown that for any structure that would lead to physically observable differences of the kind we observe, and for any invariant method to obtain it, there are infinitely many physically distinct solutions.
The tensor product structure, regardless of the additional conditions, not only doesn't emerge uniquely, but the number of additional continuous parameters needed grows exponentially with the number of subsystems (or regions of space) \cite{Stoica2024DoesTheHamiltonianDetermineTheTPSAndThe3dSpace}.
And even if we assume a tensor product structure as given, the description of the world is maximally ambiguous \cite{Stoica2023PrinceAndPauperQuantumParadoxHilbertSpaceFundamentalism}.

\section{Observers and the meaning of the observables}
\label{s:observers}

So how can we answer Question \ref{question:whence-space}?
As I mentioned, we know the meaning of the observables from experiments.
But to understand these meanings as extracted from experiments, we need to include into the equation the experimentalists, the observers.
This is a different problem than the measurement problem, which is also sometimes understood as necessitating the inclusion of the observers.
In the present case, the observers are needed to give physical meanings to the observables, and this would be necessary even in a classical world, where the symmetries are symplectomorphisms of the phase space (canonical transformations), instead of unitary symmetries.

If we take Lesson \ref{lesson:SR} into account, we will have to describe the observers only by their structure. And the entire structure is contained in the wavefunction $\psi_{O}$ of the observer,
\begin{equation}
\label{eq:wf-observer}
\psi_{O}(\q)=\braket{\q}{\psi_{O}}.
\end{equation}

Let us apply a unitary transformation $\oper{S}$.
The position operators will transform into the operators $\operq':=\oper{S}\operq\oper{S}^\dagger$, having eigenvectors $\ket{\q'}=\oper{S}\ket{\q}$, for all $\q\in\mc{C}_{\operq}$. The new basis is parametrized by another parameter space, $\q'\in\mc{C}_{\operq'}$, consisting of the eigenvalues of $\operq'$.

Then, in the new basis of eigenvectors of $\operq'$, $\(\ket{\q'}\)_{\q'\in\mc{C}_{\operq'}}$, the wavefunction has the form
\begin{equation}
\label{eq:wf-observer-like-structure}
\psi_{O}(\q')=\braket{\q'}{\psi_{O}}.
\end{equation}

Therefore, in general, an observer that looks like an observer in the basis $\(\ket{\q}\)_{\q\in\mc{C}_{\operq}}$, will, in general, not look like an observer in the new basis $\(\ket{\q'}\)_{\q'\in\mc{C}_{\operq'}}$. But sometimes other systems that are not necessarily observers may look like observers in the new basis.
Let's define such structures:

\begin{definition}
\label{def:observer-like-structure}
An \emph{observer-like structure} $O'$ with respect to a basis $\(\ket{\q'}\)_{\q'\in\mc{C}_{\operq'}}$ is a state that looks in the basis $\(\ket{\q'}\)_{\q'\in\mc{C}_{\operq'}}$ just like an observer $O$ in the position basis $\(\ket{\q}\)_{\q\in\mc{C}_{\operq}}$,
\begin{equation}
\label{eq:observer-like-structure}
\braket{\q'}{\psi_{O'}}=\braket{\q}{\psi_{O}}.
\end{equation}
\end{definition}

Let $\oper{S}$ be a unitary transformation that commutes with the Hamiltonian.
Let $O'$ be an observer-like structure in the basis $\(\ket{\q'}\)_{\q'\in\mc{C}_{\operq'}}$.
Then, Structural Realism implies the following
\begin{observation}[Relativity of structure]
\label{obs:relativity-of-structure}
If an observer-like structure $O'$ conducts experiments, the results appear in the basis  $\(\ket{\q'}\)_{\q'\in\mc{C}_{\operq'}}$ just like they would appear to an observer $O$ in the position basis $\(\ket{\q}\)_{\q\in\mc{C}_{\operq}}$.
In particular, $O'$ would interpret as the position basis $\(\ket{\q'}\)_{\q'\in\mc{C}_{\operq'}}$, and not $\(\ket{\q}\)_{\q\in\mc{C}_{\operq}}$.
\end{observation}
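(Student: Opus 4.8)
The plan is to exhibit $\oper{S}$ as an isomorphism of the entire relational structure---states, observables, and dynamics---that carries the $O$-description onto the $O'$-description, and then to invoke Lesson \ref{lesson:SR} to conclude that $O'$ cannot tell the two frames apart. First I would realize the observer-like structure explicitly: setting $\ket{\psi_{O'}}:=\oper{S}\ket{\psi_O}$ and using $\ket{\q'}=\oper{S}\ket{\q}$ gives $\braket{\q'}{\psi_{O'}}=\bra{\q}\oper{S}^\dagger\oper{S}\ket{\psi_O}=\braket{\q}{\psi_O}$, which reproduces the defining relation \eqref{eq:observer-like-structure}. Since $\operq'=\oper{S}\operq\oper{S}^\dagger$ is unitarily conjugate to $\operq$ it shares its spectrum, so the two bases are indexed by the same parameter manifold and the common label $\q$ is legitimate.

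Next I would transport every observable. Writing $\operp':=\oper{S}\operp\oper{S}^\dagger$, and noting that conjugation by $\oper{S}$ is an adjoint-preserving algebra automorphism, any $\oper{f}(\operq,\operp)$ maps to $\oper{S}\oper{f}(\operq,\operp)\oper{S}^\dagger=\oper{f}(\operq',\operp')$, exactly as already observed in Section \sref{s:lessons-quantum}. The measurable predictions then coincide, since the interposed factors $\oper{S}^\dagger\oper{S}$ cancel:
\[
\abs{\braket{\q'}{\psi_{O'}}}^2=\abs{\braket{\q}{\psi_O}}^2,\qquad \bra{\psi_{O'}}\oper{f}(\operq',\operp')\ket{\psi_{O'}}=\bra{\psi_O}\oper{f}(\operq,\operp)\ket{\psi_O}.
\]
Thus the Born-rule statistics and all expectation values available to $O'$ match those available to $O$, observable by observable.

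The dynamical step is where the hypothesis $\oper{S}\oper{H}=\oper{H}\oper{S}$ enters. Because $\oper{S}$ commutes with $\oper{H}$, it commutes with $e^{-\frac{i}{\hbar}\oper{H}t}$, so
\[
e^{-\frac{i}{\hbar}\oper{H}t}\ket{\psi_{O'}}=\oper{S}\,e^{-\frac{i}{\hbar}\oper{H}t}\ket{\psi_O},
\]
meaning the time-evolved structure is again the $\oper{S}$-image of the evolved observer. Hence not merely the instantaneous state but the whole history of outcomes and of the observer's internal records evolves identically in the two frames. This is essential: without commutation, $O'$'s memories could encode a discrepancy that distinguishes the frames, and the observation would fail.

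Finally, the interpretational conclusion follows from Lesson \ref{lesson:SR}. Every relation accessible to $O'$---outcome statistics, correlations, recorded histories---coincides with the corresponding relation for $O$; and Structural Realism grants access only to relations, never to the intrinsic nature of the relata. Therefore no experiment or inference available to $O'$ can reveal that it inhabits the $\(\ket{\q'}\)_{\q'\in\mc{C}_{\operq'}}$ basis rather than $\(\ket{\q}\)_{\q\in\mc{C}_{\operq}}$, and $O'$ is compelled to read its own outcomes as positions, interpreting $\(\ket{\q'}\)_{\q'\in\mc{C}_{\operq'}}$ as the position basis. I expect the main obstacle to be not the algebra, which is routine unitary conjugation, but this last step: making precise that structural indistinguishability forces identical interpretation. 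It rests on taking Structural Realism as a premise together with the dynamical isomorphism secured by $\oper{S}\oper{H}=\oper{H}\oper{S}$, which jointly guarantee there is no residual, non-relational handle by which $O'$ could privilege the original basis.
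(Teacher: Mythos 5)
Your proposal is correct and follows essentially the same route as the paper, which presents this Observation as a direct consequence of Structural Realism once unitary conjugation by $\oper{S}$ (commuting with $\oper{H}$) is seen to preserve all relations among states, observables $\oper{f}(\operq,\operp)$, and dynamics. You merely make explicit the algebraic details (the construction $\ket{\psi_{O'}}=\oper{S}\ket{\psi_O}$, Born-rule invariance, and commutation with the evolution operator) that the paper leaves implicit from its Section \sref{s:lessons-quantum} discussion.
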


Observation \ref{obs:relativity-of-structure} leads to the following consequence:
\begin{observation}[Indeterminacy of spacetime]
\label{obs:indeterminacy-of-spacetime}
If we assume Structural Realism, there are as many choices for the space structure (and consequently for spacetime) as there are unitary transformations that commute with the Hamiltonian and don't correspond to changes of the reference frame.
\end{observation}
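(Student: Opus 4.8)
The plan is to build an explicit correspondence between admissible choices of the space structure and the unitaries named in the statement, and then to count. First I would make precise what an admissible space structure is: a choice of position observables $\operq'$ forming a complete set of commuting observables whose spectra assemble into a parameter space $\mc{C}_{\operq'}$, equipped with conjugate momenta $\operp'$ in the sense of \eqref{eq:p-obs}, and required to reproduce the same physics. Using the structural equivalence established just before Counterexample \ref{counterexample:SR}---that $\oper{S}\hat{q}_j\oper{S}^\dagger$ and $\oper{S}\hat{p}_k\oper{S}^\dagger$ stand in exactly the same relations as $\hat{q}_j$ and $\hat{p}_k$---I would argue that any admissible $\operq'$ sharing the spectral data of $\operq$ arises from a unitary change of basis $\ket{\q'}=\oper{S}\ket{\q}$, so that $\operq'=\oper{S}\operq\oper{S}^\dagger$ for some unitary $\oper{S}$. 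Thus the candidate space structures are indexed by unitaries.

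Next I would impose the two filters in the statement. The parenthetical ``consequently for spacetime'' signals that the dynamical law must also be reproduced: the {\schrod} evolution \eqref{eq:evol} keeps the same functional form in the new basis exactly when $\oper{S}\oper{H}\oper{S}^\dagger=\oper{H}$, \ie when $\oper{S}$ commutes with $\oper{H}$---the hypothesis already isolated in Question \ref{question:whence-space}. Since the time-translation generator $\oper{H}$ is then untouched, fixing the space structure fixes the full spacetime structure, which justifies the parenthetical. The second filter discards the unitaries implementing spatial translations, rotations, and boosts---the action of the known spacetime symmetry group on $\operq$---because by the convention fixed earlier (``these equivalences go without saying'') these return the same space structure and must not be counted as genuine alternatives. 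Quotienting the commutant of $\oper{H}$ by this reference-frame subgroup gives the set I would place in bijection with the distinct admissible space structures.

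I expect the main obstacle to be showing that this is a genuine bijection and that no hidden datum collapses it. Two points need care. First, two unitaries yield the same positions whenever they differ by an element of the stabilizer of the basis $\(\ket{\q}\)_{\q\in\mc{C}_{\operq}}$, so ``as many as there are unitaries'' must be read after also quotienting by that stabilizer; I would check that the residual freedom is then exactly relabeling and nothing more. Second, and more essentially, I must exclude the possibility that some further relational fact secretly distinguishes $\operq$ from $\operq'=\oper{S}\operq\oper{S}^\dagger$. Here I would invoke Lesson \ref{lesson:SR} together with Observation \ref{obs:relativity-of-structure}: since $\operq'$ bears all the same relations to $\oper{H}$ and to every $\oper{f}(\operq,\operp)$ as $\operq$ does, and since an observer-like structure $O'$ would by Observation \ref{obs:relativity-of-structure} interpret $\ket{\q'}$ as \emph{its own} position basis and produce an identical experimental record, no structural fact can break the tie. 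Each $\oper{S}$ surviving both filters therefore yields a genuinely distinct yet empirically indistinguishable space structure, so the number of space structures equals the number of such unitaries.
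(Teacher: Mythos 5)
Your proposal is correct and takes essentially the paper's own route: the paper obtains this Observation directly from Observation \ref{obs:relativity-of-structure} combined with Lesson \ref{lesson:SR} --- precisely the two ingredients you invoke to rule out any structural fact distinguishing $\operq$ from $\oper{S}\operq\oper{S}^\dagger$ --- after restricting to unitaries that commute with $\oper{H}$ and modding out reference-frame changes. Your extra bookkeeping (quotienting by the stabilizer of the position basis, and checking that every admissible $\operq'$ arises from some unitary) is a refinement the paper leaves implicit, but it does not alter the argument.
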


Observation \ref{obs:indeterminacy-of-spacetime} shows that the answer to Question \ref{question:whence-space} can't be given by assuming Structural Realism. Moreover, all our Lessons from Section \sref{s:lessons-relativity} seem to no longer apply.
Or, if we insist that they apply, we have to admit a huge number of alternative ways to choose the spacetime structure!

\section{The ambiguity of Structural Realism}
\label{s:ambiguity-structural-realism}

One may think that, while different observer-like structures can in principle identify different position operators, this identification is somehow physically the same. That is, maybe these position observables represent the same space but in a different reference frame, or maybe they can be related by a gauge or another kind of symmetry transformation, so that the physical world looks the same to two observer-like structures.
Now we will see that this is not the case, and the ambiguity from Observation \ref{obs:indeterminacy-of-spacetime} is maximal.

To see this, consider an observer $O$, and let $E$ be her environment, the rest of the world. Let the total state of the world at the time $t$ be $\ket{\psi}=\ket{\psi_\omega}\otimes\ket{\psi_\varepsilon}$ (what follows work for entangled states too). Suppose that there is a physical property $A$ of the environment, having a definite value $a$. That is, the total state $\ket{\psi}$ corresponding observable $\hat{A}$ is an eigenstate of $\hat{A}$ with the eigenvalue $a$, $\hat{A}\ket{\psi}=a\ket{\psi}$.
Suppose that the observer $O$ knows this, that is, the value $a$ is encoded in the observer's memory as the value of the property $A$ of the environment. The property $A$ can be any property of the environment, the position of an external object, its size, its color, or even a microscopic property like the spin of a Silver atom, or the fact that an atom is a Silver atom.
This is how it looks on the parameter space $\mc{C}$ identified by the observer as her position configuration space.

Let $\ket{\psi'}=\ket{\psi'_\omega}\otimes\ket{\psi'_\varepsilon}$ be another state, so that $\ket{\psi'_\omega}=\ket{\psi_\omega}$, but for the environment $\ket{\psi'_\varepsilon}$, the property $A$ has a different value $a'\neq a$. That is, $\hat{A}\ket{\psi'}=a'\ket{\psi}'$.
Both states $\ket{\psi}$ and $\ket{\psi'}$ are possible states of the world, but the world is in the state $\ket{\psi}$, not in $\ket{\psi'}$.
However, I will show that there is another parameter space $\mc{C}'$ on which the state $\ket{\psi}$ appears just like a state $\ket{\psi'}$ satisfying $\hat{A}\ket{\psi'}=a'\ket{\psi}'$ would appear on $\mc{C}$.

Let's consider first the particular case when for every eigenvalue $a$ of $\hat{A}$, $-a$ is also an eigenvalue, and any two eigenspaces can be mapped one into the other by a unitary transformation. That is, if they are finite-dimensional, they all have the same dimension.
In \cite{Stoica2023PrinceAndPauperQuantumParadoxHilbertSpaceFundamentalism} it was shown that in this case there is a unitary transformation $\oper{S}$ that commutes with $\oper{H}$ and interchanges $\ket{\psi_\omega}\otimes\ket{\psi_\varepsilon}$ and $\ket{\psi'_\omega}\otimes\ket{\psi'_\varepsilon}$, for some $\ket{\psi'_\omega}\otimes\ket{\psi'_\varepsilon}$ as above.
Then, the transformation $\oper{S}$ also interchanges the parameter spaces $\mc{C}$ and $\mc{C}'$, so that $\ket{\psi_\omega}\otimes\ket{\psi_\varepsilon}$ appears on $\mc{C}'$ just like $\ket{\psi'_\omega}\otimes\ket{\psi'_\varepsilon}$ appears on $\mc{C}$.
The proof was given for the \emph{standard model of quantum measurements} (see \eg \cite{Mittelstaedt2004InterpretationOfQMAndMeasurementProcess}, \S 2.2(b), and \cite{BuschGrabowskiLahti1995OperationalQuantumPhysics}, \S II.3.4), which can account for spin measurements with the Stern-Gerlach device, various photon counters and beam splitter experiments, photon polarization measurements \etc. (\cite{BuschGrabowskiLahti1995OperationalQuantumPhysics}, \S VII).

This result can be extended easily to any Hermitian operator $\oper{A}$.
Let us extend it to the case when $a'\neq -a$, but any two eigenspaces can still be related by unitary transformations. Then, we can find a basis in which $\oper{A}$ is diagonal, interchange the diagonal elements so that $a'$ is replaced by $-a$ but $a$ remains untouched, and return to the original basis. We will get another operator $\wh{\wt{A}}$ having the same eigenvectors as $\hat{A}$, so that if we know that a vector $\ket{\psi}'$ is an eigenvector of $\wh{\wt{A}}$ with the eigenvalue $-a$, we can determine that $\ket{\psi}'$ is also an eigenvector of $\hat{A}$, but with eigenvalue $a'$.

Let us now consider a general Hermitian operator $\hat{A}$. In a basis in which $\hat{A}$ is diagonal, we can decompose each eigenspace as a direct sum of subspaces that can be all related by unitary transformations. Then, all vectors in each such subspace are eigenvectors with the same eigenvalue. This can be done so that the projector on each of these subspaces is diagonal in the basis in which $\hat{A}$ is diagonal. After that, we can change the diagonal elements of $\hat{A}$ so that they differ for different subspaces. After returning to the original basis, we get a Hermitian operator $\wh{\wt{A}}$, so that if we know that a vector $\ket{\psi}'$ is an eigenvector of $\wh{\wt{A}}$ with the eigenvalue $-a$, we can determine that $\ket{\psi}'$ is also an eigenvector of $\hat{A}$, but with eigenvalue $a'$.

We obtained that for every state $\ket{\psi}\in\hilbert$ containing an observer $O$ who knows the value $a$ of a property $A$ of the environment $E$, and for any other possible value $a'$ of $A$, there is an observer-like structure $O'$ isomorphic with $O$, on a different parameter space $\mc{C}'$, so that the memory of $O'$ contains the information that the value of the property $A$ of the environment of $O'$ is $a$, but in fact it is $a'$.
Note that the property $A$ is represented by an operator $\hat{A}\(\q,\p\)$ on $\mc{C}$, but on $\mc{C}'$ it is represented by $\hat{A}':=\hat{A}\(\q',\p'\)$, where $\q$ represents the positions on $\mc{C}'$, and $\q$ the canonically conjugate momenta, as explained in Section \sref{s:lessons-quantum}. This is why the same vector $\ket{\psi}$ can be eigenvector of both $\hat{A}$ and $\hat{A}'$, but with different eigenvalues $a\neq a'$.
Therefore, we can apply Lemma 1 from \cite{Stoica2023PrinceAndPauperQuantumParadoxHilbertSpaceFundamentalism} to any observable $\hat{A}$.

This has the following consequence: if we choose randomly a parameter space $\mc{C}'$ so that
\begin{enumerate}
	\item 
$\ket{\psi}$ appears on $\mc{C}'$ as containing an observer-like structure $O'$ isomorphic with $O$,
	\item 
$O'$ (just like $O$) encodes in the structure of its brain the information that the property $A$ of the environment of $O'$ has a definite value $a$,
	\item 
on $\mc{C}'$, the property $A$ has a definite value,
\end{enumerate}
then the value of the property $A$ on $\mc{C}'$ can be any eigenvalue of $A$.
Moreover, for any observer $O$ on $\mc{C}$ as above, the parameter spaces $\mc{C}'$ on which there is an observer-like structure isomorphic with $O$ as above, are uniformly distributed. So a random observer-like structure $O'$ would know nothing about the value of the property $A$ of its environment!
And this applies to any property of the environment.

Let us summarize the result we obtained:
\begin{theorem}
\label{thm:conscious-observer}
If observers were reducible to structures, any property of the external world would be unknown to them.
\qed
\end{theorem}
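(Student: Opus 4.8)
The plan is to read the statement as the logical endpoint of the construction already carried out in this section: once an observer is identified with nothing more than its structure, that structure fails to fix the actual value of any environmental property. First I would make the Structural Realism premise precise as a supervenience claim: under Lesson \ref{lesson:SR}, ``what the observer knows about $A$'' must be a function of the observer's structure alone, i.e.\ of the isomorphism class of $\psi_O$, with no appeal to which parameter space is the ``true'' one. The goal is then to exhibit, for a single fixed structure, actual values of $A$ ranging over the entire spectrum, all compatible with that one structure.

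Next I would fix an arbitrary Hermitian observable $\hat{A}$ representing a property of the environment and an arbitrary observer $O$ whose memory encodes the value $a$. I would invoke the construction established above: using Lemma 1 of \cite{Stoica2023PrinceAndPauperQuantumParadoxHilbertSpaceFundamentalism} together with its extension to a general $\hat{A}$ (the diagonalize--relabel--return-to-basis argument), for each eigenvalue $a'$ of $\hat{A}$ there is a parameter space $\mc{C}'$ carrying an observer-like structure $O'$ (Definition \ref{def:observer-like-structure}) isomorphic to $O$, with the same memory content ``the value of $A$ is $a$'', yet on which the actual value of $A$ is $a'$. By Observation \ref{obs:relativity-of-structure}, $O'$ would interpret $\mc{C}'$ exactly as $O$ interprets $\mc{C}$, so $O$ and $O'$ are indistinguishable as observers under the supervenience premise.

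I would then combine these alternatives. Because $O$ and $O'$ share the same structure, supervenience forces them to have the same knowledge of $A$; but $a'$ was an arbitrary eigenvalue, so the single shared structure is equally the structure of an observer whose environment actually has value $a'$, for every $a'$ in the spectrum. The uniform distribution of the parameter spaces $\mc{C}'$ then removes any probabilistic tie-breaker: a structure drawn at random is isomorphic to configurations realizing every actual value with equal weight, so no inference from structure to actual value is available. Hence the structure carries zero information about the true value of $A$, and since $A$ was arbitrary, no property of the external world is fixed by the observer's structure --- which is exactly the assertion of the theorem.

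Finally, I would flag the main obstacle. The delicate point is not the existence of the isomorphic alternatives, which is already supplied, but the inference from ``many structurally identical alternatives exist'' to ``the observer genuinely knows nothing.'' This step rests entirely on two ingredients that must be stated cleanly: (i) the Structural Realism premise, which turns knowledge into a structural predicate and thereby makes $O$ and $O'$ epistemically interchangeable; and (ii) the uniformity of the $\mc{C}'$, which forbids recovering the actual value by any weighting. If either weakens --- if knowledge could depend on more than structure, or if the alternative parameter spaces were non-uniformly distributed --- the conclusion would fail, so the whole weight of the theorem sits on making these two premises explicit and on checking that the earlier construction really delivers (ii) for an arbitrary $\hat{A}$.
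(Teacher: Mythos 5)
Your proposal is correct and takes essentially the same route as the paper: the theorem there carries its terminal symbol precisely because its proof \emph{is} the construction immediately preceding it --- Lemma 1 of \cite{Stoica2023PrinceAndPauperQuantumParadoxHilbertSpaceFundamentalism} extended to an arbitrary Hermitian $\hat{A}$ via the diagonalize--relabel--return argument, combined with the uniform distribution of the parameter spaces $\mc{C}'$ --- which is exactly the material you assemble. Your explicit formulation of Structural Realism as a supervenience premise, and your flagging of the two load-bearing ingredients (structure-only knowledge and uniformity of the $\mc{C}'$), simply make the paper's own argument more precise.
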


But we do know many properties of the external world. Therefore,
\begin{consequence}
\label{consequence:zombies}
There are observer-like structures that are not observers.
\end{consequence}

Even if these observer-like structures have the same structure, and even if, in their proper position basis, the Hamiltonian has the same form as it does in the observer's position basis, not all these observer-like structures can actually be observers. If they could be, a random observer-like structure would not know the properties of its environment.

It also follows that
\begin{corollary}
\label{thm:no-structural-realism}
Structural realism is refuted.
\end{corollary}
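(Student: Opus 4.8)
The plan is to argue by \emph{reductio ad absurdum}, chaining together the results already established. First I would make explicit the premise that links Structural Realism to the hypothesis of Theorem \ref{thm:conscious-observer}. Under (Epistemic) Structural Realism (Lesson \ref{lesson:SR}), the only thing to which we have access — and therefore the only thing that can individuate an observer — is its relational, structural profile. Since by Observation \ref{obs:relativity-of-structure} everything about the observer is encoded in the wavefunction $\psi_{O}$, and this structure is preserved by any unitary $\oper{S}$ commuting with $\oper{H}$, Structural Realism commits us to treating an observer as \emph{reducible to its structure}: two states related by such an $\oper{S}$ are structurally indistinguishable, so for a structural realist they stand on exactly equal footing as candidate observers.

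Next I would invoke Theorem \ref{thm:conscious-observer} directly. Granting this reduction, any property $A$ of the external world must be unknown to the observer: for every admissible value $a'$ there is a structurally isomorphic observer-like structure $O'$, living on a different parameter space $\mc{C}'$, whose memory records precisely the same datum while the true value on $\mc{C}'$ is $a'$, and these $\mc{C}'$ are uniformly distributed over the eigenvalues of $A$. A purely structural observer therefore has no relational ground on which to prefer $a$ over any competing $a'$, and so possesses no genuine knowledge of $A$. This is the content the theorem supplies, and it is already reinforced by Consequence \ref{consequence:zombies}.

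The final step is to confront this conclusion with the plain empirical fact that we \emph{do} know many properties of the external world — we reliably single out the actual value $a$, not an arbitrary $a'$. This is a flat contradiction with the previous paragraph, so the assumption that generated it must be rejected. Hence Structural Realism, at least in the strong form that exhausts an observer in its relational structure, cannot hold, which is the desired conclusion.

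The step I expect to be the main obstacle is the first one: establishing rigorously that Structural Realism \emph{entails} the reducibility of observers to structure, rather than merely permitting it. A committed structural realist may try to escape by insisting that the position parameter space $\mc{C}$ enjoys some privileged yet still ``relational'' status, or by retreating to a reading so weak that it no longer claims the relata are exhausted by the relations. To close this gap I would lean on the fact, established in Section \sref{s:ambiguity-structural-realism}, that $\oper{S}$ is a genuine automorphism of the \emph{entire} relational structure — it commutes with $\oper{H}$ and carries observables to observables obeying the identical algebra — so that \emph{no} relation whatsoever distinguishes $O$ from $O'$. Any feature that did distinguish them would, by definition, be non-structural, and invoking it would already concede that the relata carry content beyond the relations; but that concession is exactly the denial of Structural Realism we set out to obtain.
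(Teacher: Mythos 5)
Your proposal is correct and follows essentially the same \emph{reductio} as the paper: the paper's own proof simply states that if Structural Realism were true, observers would be reducible to their structure and dynamics, contradicting Consequence \ref{consequence:zombies}. You merely inline the derivation of that Consequence --- Theorem \ref{thm:conscious-observer} plus the empirical fact that we do know properties of the external world --- so the logical content is the same, with your first and last paragraphs making explicit the reducibility premise that the paper leaves implicit.
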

\begin{proof}
If structural realism were true, observers would be reducible to their structure and the dynamics. But this would contradict Consequence \ref{consequence:zombies}.
\end{proof}

\section{Sentient observers and spacetime}
\label{s:sentient-spacetime}

Let's see if now we can answer Question \ref{question:whence-space}.
Theorem \ref{thm:conscious-observer} shows that if observers were reducible to their structure, even if we take dynamics into account, they wouldn't know the value of any physical property.
But we can know the values of the physical properties. This also means that we are able to give physical meaning to the operators representing observables. In particular, we give the physical meaning of positions to some of these operators, and not to others, even if the latter are unitarily equivalent with the former.

But can we identify the position observables uniquely?
The answer is yes, because otherwise, if there were an ambiguity, we wouldn't be able to do this. In general, for any observables that can be measured, even in principle, we can assign a unique physical meaning. Any ambiguity would make it impossible for that observable to be measured.
But are there properties that we can't measure? Obviously absolute positions and velocities, and absolute potentials. 
In general, we can't assign unambiguously  physical meaning to the operators that are not observable.
It follows that
\begin{theorem}
\label{thm:uniqueness-physical-meaning}
The parameter space determined by the existence of the observers is unique up to spacetime and local gauge symmetries.
This uniquely determines the meaning of the position operators $\operq$ and of their canonical conjugates $\operp$, and also of all other observables $\oper{f}\(\operq,\operp\)$, up to spacetime and local gauge symmetries.
\qed
\end{theorem}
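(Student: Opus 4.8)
The plan is to prove the statement as a matching pair of bounds on the residual ambiguity of the parameter space: first that the spacetime and local gauge symmetries genuinely \emph{are} ambiguities (a lower bound, so the ``up to'' cannot be tightened), and second that there is no ambiguity \emph{beyond} them (an upper bound, which is where the real content lies). The lower bound is already supplied by Section \sref{s:lessons-relativity}: Lessons \ref{lesson:euclid}--\ref{lesson:gauge} establish that a change of origin, axes, inertial frame, or gauge does not transpire in any observation, because absolute position, absolute velocity, and absolute potentials are unmeasurable. Since the relations among $\operq$, $\operp$, and the derived $\oper{f}(\operq,\operp)$ are preserved under these transformations, no observer can discriminate among the corresponding parameter spaces, so they must all be admitted on equal footing. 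This fixes precisely the group modulo which uniqueness is to be claimed.

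The upper bound is the crux, and I would obtain it from the contrapositive of Theorem \ref{thm:conscious-observer} together with the empirical premise that we do in fact know many properties of the external world. Suppose, for contradiction, that two parameter spaces $\mc{C}$ and $\mc{C}'$ were \emph{both} admissible as observer parameter spaces while being related by a unitary $\oper{S}$ with $\oper{S}\oper{H}=\oper{H}\oper{S}$ that is \emph{not} a spacetime or gauge symmetry. By Observation \ref{obs:relativity-of-structure} and the construction of Section \sref{s:ambiguity-structural-realism}, for any measurable property $A$ with definite value $a$ on $\mc{C}$, the very same state $\ket{\psi}$ registers on $\mc{C}'$ a different eigenvalue $a'$ of $A$; and by the uniform-distribution statement established there, the value read off on a randomly chosen such $\mc{C}'$ ranges over all eigenvalues of $A$. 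An observer on $\mc{C}'$ would therefore carry, in the structure of its brain, no reliable information about $A$---exactly the conclusion of Theorem \ref{thm:conscious-observer}. Since this would hold for \emph{every} property $A$, it contradicts the premise that the external world is in fact knowable. Hence no admissible parameter space can be related to $\mc{C}$ by a symmetry outside the spacetime and gauge group, which yields uniqueness up to exactly that group.

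With the parameter space pinned down in this way, the second sentence of the theorem follows by unwinding the definitions of Section \sref{s:lessons-quantum}. On the fixed $\mc{C}$ the position operators $\operq$ are the coordinate (multiplication) operators, the momenta $\operp$ are their canonical conjugates obtained as partial derivatives with respect to the position spectra, and every remaining observable is a function $\oper{f}(\operq,\operp)$. Because the residual group acts by conjugation and preserves all such functional relations, the physical meaning of $\operq$, $\operp$, and each $\oper{f}(\operq,\operp)$ is determined exactly up to that same group, as claimed.

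The step I expect to be the main obstacle is the upper bound, specifically the passage from ``measurable'' to ``unambiguous.'' The delicate point is to exclude a \emph{residual} ambiguity that would leave every actually-performed measurement intact while still permuting some in-principle-measurable quantity: one must argue that any transformation compatible with reliable knowledge of \emph{all} measurable properties must leave \emph{every} observable's value invariant, and hence already lie in the spacetime and gauge group. This is where the argument leans hardest on Theorem \ref{thm:conscious-observer}---through the quantifier ``for any property $A$''---and on the anti-skeptical premise that the relata we read off from experiments are genuinely known rather than merely conjectured; a fully rigorous version would need these two inputs stated as explicit hypotheses.
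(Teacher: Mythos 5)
Your proposal takes essentially the same route as the paper: the paper's own proof (the two paragraphs preceding the theorem, which carries its \qed inline) likewise derives the upper bound on the ambiguity from Theorem~\ref{thm:conscious-observer} together with the empirical premise that we do know properties of the external world (``if there were an ambiguity, we wouldn't be able to do this''), and likewise identifies the surviving ambiguity with exactly the unmeasurable absolute quantities---positions, velocities, and potentials---i.e., spacetime and local gauge symmetries. Your two-bound organization and the obstacle you flag (the passage from ``measurable'' to ``unambiguous'') correspond to what the paper compresses into the single assertion ``any ambiguity would make it impossible for that observable to be measured,'' so your version is, if anything, a more explicit rendering of the same argument.
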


In particular, there is a relation between the existence of sentient observers and a preferred spacetime structure:
\begin{corollary}
\label{thm:uniqueness-space}
In a quantum theory where spacetime is classical, the existence of the observers uniquely determines a spacetime.
\end{corollary}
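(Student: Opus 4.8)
The plan is to obtain the corollary as a specialization of Theorem \ref{thm:uniqueness-physical-meaning}, which already secures uniqueness of the position operators $\operq$ (and of $\operp$ and of every $\oper{f}\(\operq,\operp\)$) up to spacetime and local gauge symmetries. The only additional ingredient is the classicality hypothesis on spacetime, which I would use to promote the uniquely determined \emph{spatial} parameter space into a full \emph{spacetime}. So the argument splits into (i) recalling what the theorem fixes, and (ii) showing that, under classical spacetime, this fixes a spacetime and not merely a space.

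First I would invoke Theorem \ref{thm:uniqueness-physical-meaning}: the demand that the state $\ket{\psi}$ contain a genuine observer $O$ rather than a mere observer-like structure $O'$ singles out a preferred $\operq$, hence a preferred parameter space $\mc{C}_{\operq}$, unique up to the allowed symmetries. At a fixed instant $t$, the joint spectrum of $\operq$ furnishes the spatial configuration, so $\mc{C}_{\operq}$ is to be read as a spatial slice. This is the content the theorem hands us for free.

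Next I would use the {\schrod} evolution (\ref{eq:evol}) together with the assumption that spacetime is classical---\ie a fixed, non-dynamical background rather than itself a quantum degree of freedom---to stitch these slices together along the parameter $t$. Because $\operq$ is fixed once and for all (up to the admissible symmetries) and the propagator $e^{-\frac{i}{\hbar}\oper{H}t}$ relates configurations at different times, the one-parameter family of slices assembles into a four-dimensional spacetime. The residual freedom left by Theorem \ref{thm:uniqueness-physical-meaning}---change of origin, axes, inertial frame, or gauge---is exactly the symmetry (isometry) group of that spacetime, so it relabels but does not alter the spacetime as a geometric object. This delivers uniqueness of the spacetime itself.

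The hard part will be making the assembly step precise: one must verify that the instantaneous spatial parameter spaces glue coherently into a single manifold whose affine and causal structure is dictated by the dynamics, and that the spacetime-symmetry ambiguity and the gauge ambiguity do not conspire to yield inequivalent spacetimes. In the relativistic case this amounts to checking that the group one must quotient out is the Poincar\'e group acting on Minkowski spacetime---consistent with the central role of Poincar\'e symmetry noted in Section \sref{s:lessons-quantum}---whereas non-relativistically one expects the Galilean group with its absolute time, and on a curved background the claim would be only local. Pinning down precisely which symmetries are to be divided out, and confirming that what remains is a \emph{single} spacetime rather than a family, is where the genuine work lies.
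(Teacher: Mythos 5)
Your proposal correctly identifies the starting point (specializing Theorem \ref{thm:uniqueness-physical-meaning}), but it leaves open precisely the one step that constitutes the paper's entire proof, and where you do touch on that step you assert something inaccurate. Theorem \ref{thm:uniqueness-physical-meaning} leaves a residual ambiguity consisting of \emph{two} kinds of transformations: spacetime symmetries and local gauge symmetries. You fold both into ``the symmetry (isometry) group of that spacetime,'' but local gauge transformations are not isometries or symmetries of spacetime at all --- they are internal symmetries. The paper's proof consists exactly of dispatching this gauge part: local gauge transformations act on the fibers of a fiber bundle over spacetime and commute with the projection onto the base manifold, hence they do not move points of the base and do not change positions. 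Once the gauge ambiguity is seen to act trivially on the base, the only remaining freedom is the spacetime symmetry group itself, which relabels rather than alters the spacetime; that is the whole argument. You explicitly flag ``that the spacetime-symmetry ambiguity and the gauge ambiguity do not conspire to yield inequivalent spacetimes'' as ``where the genuine work lies'' --- so by your own account the proposal defers, rather than supplies, the key idea.

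The rest of your proposal --- assembling instantaneous slices via the propagator $e^{-\frac{i}{\hbar}\oper{H}t}$ into a four-dimensional manifold, and worrying about Poincar\'e versus Galilei groups --- is not what the paper does and is not needed at the level of rigor the corollary aims at: the paper simply reads off that, with gauge transformations acting vertically, ``the position observables are determined up to space symmetries, or up to spacetime symmetries in relativistic formulations.'' Your slicing construction is a reasonable elaboration of how one would make the statement more concrete, but it adds difficulty (coherent gluing, causal structure from dynamics) without closing the actual gap, which is the fiber-bundle observation about gauge transformations.
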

\begin{proof}
Local gauge transformations act on a fiber bundle's fibers, and commute with the projector on the base manifold. Therefore, they don't change the position.
Then, the position observables are determined up to space symmetries, or up to spacetime symmetries in relativistic formulations.
\end{proof}

This gives the following answer to Question \ref{question:whence-space}. Recall that there is nothing in the structure or the dynamics that exhibits a preferred choice of the observables that identify space or spacetime as preferred structures of the theory. The needed breaking of the unitary symmetry of Quantum Theory comes from the fact that only some of the observer-like structures corresponding to different choices of the parameter space can be sentient.

Note that this result applies to Classical Physics too, because a classical system can be described by a state vector, and Hamilton's equations can be replaced by a {\schrod} equation, as Koopman and von Neumann showed \cite{Koopman1931HamiltonianSystemsAndTransformationInHilbertSpace,vonNeumann1932KoopmanMethod}. The allowed state vectors, representing unsuperposed classical states, form a basis of the Hilbert space, and the momentum operators commute with the position operators, but these differences don't affect the discussion from this article and the proof of Theorem \ref{thm:no-structural-realism}. The classical case can also be proved directly \cite{Stoica2023AskingPhysicsAboutPhysicalismZombiesAndConsciousness}. In particular it applies to Classical General Relativity, because it admits a classical Hamiltonian formulation \cite{ADM1962TheDynamicsOfGeneralRelativity}.

For Quantum Gravity, this implies the existence of a preferred spacetime structure, but one having multiple instances, leading to a version of the \emph{many-worlds interpretation} endowed with genuine probabilities (\cite{Stoica2023TheRelationWavefunction3DSpaceMWILocalBeablesProbabilities} ch. 8).

\section{Physicalism vs. sentient observers}
\label{s:physicalism-vs-observers}

The definition of materialism changed over time. From a Newtonian mechanistic, billiard ball universe (which Newton himself didn't take as supporting materialism), the goal post seemed to move. There was a time when ``magnetism'' was considered by some as immaterial, being associated with hypnosis and telepathy, so others said this was impossible. Maybe this is why Maxwell tried to explain it by a gear-like mechanism in the atoms of aether, and aether theories continued to be mainstream until Special Relativity was discovered. Decades later, the content of the universe was still artificially divided in ``matter'' and ``radiation'' or ``fields'', until Quantum Field Theory managed to give a unified picture. Gradually, the term ``materialism'' was replaced by ``physicalism'' or ``naturalism'', as if it would be unphysical or even unnatural or supernatural for consciousness to be fundamental. There was a time, before the discovery of the Minkowski spacetime, when a fourth dimension was considered a spiritualist speculation.

With such a shift in the definition of physicalism, we may wonder if there is a way to test it. To be a falsifiable hypothesis and not an always changing, always evasive claim, it needs to be pinned in a clear definition. Let's try:
\begin{definition}
\label{def:physicalism}
\emph{Physicalism} is the thesis that everything is made of stuff that doesn't have phenomenal powers. Consciousness is reducible to the structure and the dynamics of that stuff, which can be called ``matter''.
\end{definition}

The term ``phenomenal'' is understood in the sense of \emph{phenomenology}, ``the study of structures of consciousness as experienced from the first-person point of view'' \cite{sep-phenomenology}. That is, according to Definition \ref{def:physicalism}, there is no need for more than Structural Realism to account for consciousness.
Not all who see themselves as physicalists bother to clarify their position, but this definition is consistent with those given by people who are more precise \cite{sep-physicalism,Goff2017ConsciousnessAndFundamentalReality}, and it's captured in the idea that there are no \emph{philosophical zombies} identical to us but lacking sentient experience \cite{KirkSquires1974ZombiesVMaterialists,sep-zombies}. According to physicalism, either philosophical zombies don't exist, or we all are philosophical zombies \cite{Dennett1993ConsciousnessExplained}.

The difference between observers and observer-like structures that can't be observers is precisely the fact that someone can be an observer, but there is no ``what is like to be'' an observer-like structure that is not an actual observer, even though, according to Theorem \ref{thm:conscious-observer}, such structures exist.

Maybe Newton's intuition (see Section \sref{s:lessons-relativity}) that connected space and time with the existence of something that transcends and grounds the material world was not completely vacuous after all.

After the discovery of General Relativity but even before the discovery of Quantum Mechanics, Bertrand Russell, one of the most lucid and thoughtful exponents of skepticism and atheism, stated that materialism was refuted \cite{Russell1921AnalysisOfMind}, and concluded that there is a unique substance with sentient powers beyond both mind and matter \cite{Russell1927AnalysisOfMatter,Russell2021TheProblemsOfPhilosophy}. Similarly monistic conclusions were reached by James \cite{James1904AWorldOfPureExperience}, Mach \cite{Mach1914TheAnalysisOfSensationsAndTheRelationPhysicalPsychical}, Eddington \cite{Eddington1928NatureOfThePhysicalWorld}, Whitehead \cite{Whitehead1978ProcessAndReality}, and Weyl \cite{Weyl2021SpaceTimeMatter}.
Also see Auger \cite{Auger2021TheMethodsAndLimitsOfScientificKnowledge}.
Not to mention the founders of Quantum Theory.

In connection with relativity, Eddington writes (\cite{Eddington2017SpaceTimeAndGravitation}, p. 146-147)
\begin{quote}
And yet, in regard to the nature of things, this knowledge is only an empty shell—a form of symbols. It is knowledge of structural form, not knowledge of content. All through the physical world runs that unknown content, which must surely be the stuff of our consciousness.
\end{quote}

These proposals were motivated by the need to include sentient experience in a geometric or structural-realist world. The present article confirms their vision, by starting from the quantum world, and exploring how meaning is assigned to the quantum operators, in particular to identify the structure corresponding to spacetime. Theorem \ref{thm:no-structural-realism} led to the conclusion that consciousness is fundamental and accounts for the needed breaking of unitary symmetry.

Some of these proposals were motivated both by the inclusion of consciousness, and by the need to explain how our sense of time flowing, called by Einstein ``a stubbornly persistent illusion'', is compatible with the relativistic block universe.
Weyl ties consciousness and its experience of flow of time in the relativistic block universe to consciousness \cite{Weyl2021SpaceTimeMatter}.
Jeans elaborates, proposing that \cite{Jeans2020TheMysteriousUniverse}:
\begin{quote}
our consciousness is like that of a fly caught in a dusting-mop which is being drawn over the surface of the picture; the whole picture is there, but the fly can only experience the one instant of time with which it is in immediate contact [...]
\end{quote}

A profound analysis of the sense of the experience of time in the block universe, in particular of Weyl's proposal, was done by Petkov \cite{Petkov2013FromIllusionsToReality}.

This article doesn't clarify the relation between the sense of time flowing and the block universe. The flow of time seems, in relation to the block universe, similar to what the formulation of geometry in a particular basis is to their relational formulation based on invariants. A worm's eye view \vs God's eye view. Dual aspects of the same sentiential ontology?

The results from this article don't explain what consciousness is, or what its sentiential ontology is. It only shows that sentience is not reducible to structure and dynamics, and that it is required to ground all observable physical properties, endowing them with meaning.

\addcontentsline{toc}{section}{\refname}


\begin{thebibliography}{10}

\bibitem{ADM1962TheDynamicsOfGeneralRelativity}
R.~Arnowitt, S.~Deser, and C.~W. Misner.
\newblock The dynamics of general relativity.
\newblock In L.~Witten, editor, {\em Gravitation: {A}n Introduction to Current
  Research}, pages 227--264, New York, U.S.A., 1962. Wiley, New York.

\bibitem{Auger2021TheMethodsAndLimitsOfScientificKnowledge}
P.~Auger.
\newblock The methods and limits of scientific knowledge, {L}ecture delivered
  september 5, 1952.
\newblock In V.~Petkov, editor, {\em On fundamental physics and scientific
  knowledge}, pages 77--104. Minkowski Institute Press, 2021.

\bibitem{Bargmann1964NoteOnWignerTheoremSymmetryOperations}
V.~Bargmann.
\newblock Note on {W}igner's theorem on symmetry operations.
\newblock {\em J. Math. Phys.}, 5(7):862--868, 1964.

\bibitem{Bell2004SpeakableUnspeakable}
J.S. Bell.
\newblock {\em {S}peakable and unspeakable in quantum mechanics: {C}ollected
  papers on quantum philosophy}.
\newblock Cambridge University Press, 2004.

\bibitem{Bohm1951TheParadoxOfEinsteinRosenAndPodolsky}
D.~Bohm.
\newblock {\em {T}he Paradox of {E}instein, {R}osen, and {P}odolsky}, pages
  611--623.
\newblock Prentice-Hall, Englewood Cliffs, 1951.

\bibitem{Bohm1952SuggestedInterpretationOfQuantumMechanicsInTermsOfHiddenVariables}
D.~Bohm.
\newblock {A} suggested interpretation of quantum mechanics in terms of
  ``hidden'' variables, {I \& II}.
\newblock {\em Phys. Rev.}, 85(2):166--193, 1952.

\bibitem{BuschGrabowskiLahti1995OperationalQuantumPhysics}
P.~Busch, M.~Grabowski, and P.~Lahti.
\newblock {\em Operational quantum physics}.
\newblock Springer, Berlin, Heidelberg, 1995.

\bibitem{Carroll2021RealityAsAVectorInHilbertSpace}
S.M. Carroll.
\newblock Reality as a vector in {H}ilbert space.
\newblock In {\em Quantum mechanics and fundamentality}, pages 211--224.
  Springer Nature, 2022.

\bibitem{CarrollSingh2019MadDogEverettianism}
S.M. Carroll and A.~Singh.
\newblock Mad-dog {E}verettianism: {Q}uantum {M}echanics at its most minimal.
\newblock In A.~Aguirre, B.~Foster, and Z.~Merali, editors, {\em What is
  Fundamental?}, pages 95--104. Springer, 2019.

\bibitem{CotlerEtAl2019LocalityFromSpectrum}
J.S. Cotler, G.R. Penington, and D.H. Ranard.
\newblock Locality from the spectrum.
\newblock {\em Comm. Math. Phys.}, 368(3):1267--1296, 2019.

\bibitem{Dennett1993ConsciousnessExplained}
D.C. Dennett.
\newblock {\em Consciousness explained}.
\newblock Penguin UK, 1993.

\bibitem{Dirac1928QuantumTheoryOfElectron}
{Dirac, P.A.M.}
\newblock {The Quantum Theory of the Electron}.
\newblock {\em Proc. R. Soc.}, A117, 1928.

\bibitem{Eddington1928NatureOfThePhysicalWorld}
A.~Eddington.
\newblock {\em The nature of the physical world}.
\newblock Dent, London, 1928.

\bibitem{Eddington2017SpaceTimeAndGravitation}
A.S. Eddington.
\newblock {\em Space, Time and Gravitation}.
\newblock Minkowski Institute Press, 2017.

\bibitem{EPR35}
A.~Einstein, B.~Podolsky, and N.~Rosen.
\newblock Can quantum-mechanical description of physical reality be considered
  complete?
\newblock {\em Phys. Rev.}, 47(10):777, 1935.

\bibitem{georgi1982lie}
H.~Georgi.
\newblock {\em {Lie algebras in particle physics: From isospin to unified
  theories}}.
\newblock Benjamin/Cummings Pub. Co., Advanced Book Program (Reading, Mass.),
  1982.

\bibitem{GhirardiRiminiWeber1986GRWInterpretation}
G.C. Ghirardi, A.~Rimini, and T.~Weber.
\newblock Unified dynamics of microscopic and macroscopic systems.
\newblock {\em Phys. Rev. D}, 34(2):470--491, 1986.

\bibitem{Goff2017ConsciousnessAndFundamentalReality}
P.~Goff.
\newblock {\em Consciousness and fundamental reality}.
\newblock Oxford University Press, 2017.

\bibitem{Gratzer2008UniversalAlgebra}
G.~Gr{\"a}tzer.
\newblock {\em Universal algebra}.
\newblock Springer Science \& Business Media, 2008.

\bibitem{MJDHamilton2017MathematicalGaugeTheoryStandardModel}
M.J.D. Hamilton.
\newblock {\em Mathematical Gauge Theory: With Applications to the Standard
  Model of Particle Physics}.
\newblock Springer International Publishing, 2017.

\bibitem{James1904AWorldOfPureExperience}
William James.
\newblock A world of pure experience.
\newblock {\em The Journal of Philosophy, Psychology and Scientific Methods},
  1(20):533--543, 1904.

\bibitem{Jeans2020TheMysteriousUniverse}
J.~Jeans.
\newblock {\em The Mysterious Universe}.
\newblock Minkowski Institute Press, 2020.

\bibitem{sep-zombies}
Robert Kirk.
\newblock {Zombies}.
\newblock In Edward~N. Zalta and Uri Nodelman, editors, {\em The {Stanford}
  Encyclopedia of Philosophy}. Metaphysics Research Lab, Stanford University,
  {S}ummer 2023 edition, 2023.

\bibitem{KirkSquires1974ZombiesVMaterialists}
Robert Kirk and Roger Squires.
\newblock Zombies v. materialists.
\newblock {\em Proc. Aristot. Soc., Supplementary Volumes}, 48:135--163, 1974.

\bibitem{klein1872ErlangenProgram}
F.~Klein.
\newblock Vergleichende {B}etrachtungen {\"u}ber neuere geometrische
  {F}orschungen.
\newblock {\em Math. Ann.}, 43(1):63--100, 1893.

\bibitem{Koopman1931HamiltonianSystemsAndTransformationInHilbertSpace}
B.O. Koopman.
\newblock Hamiltonian systems and transformation in {H}ilbert space.
\newblock {\em Proc. Nat. Acad. Sci. U.S.A.}, 17(5):315, 1931.

\bibitem{sep-structural-realism}
J.~Ladyman.
\newblock Structural realism.
\newblock In Edward~N. Zalta, editor, {\em The Stanford Encyclopedia of
  Philosophy}. Metaphysics Research Lab, Stanford University, spring 2020
  edition, 2020.

\bibitem{Mach1914TheAnalysisOfSensationsAndTheRelationPhysicalPsychical}
Ernst Mach.
\newblock {\em The analysis of sensations, and the relation of the physical to
  the psychical}.
\newblock Open Court Publishing Company, 1914.

\bibitem{Maudlin2019PhilosophyofPhysicsQuantumTheory}
T.~Maudlin.
\newblock {\em Philosophy of physics: {Q}uantum {T}heory}, volume~33 of {\em
  Princeton Foundations of Contemporary Philosophy}.
\newblock Princeton University Press, Princeton, 2019.

\bibitem{MerzbachBoyer2011AHistoryOfMathematics}
U.~Merzbach and C.~Boyer.
\newblock {\em A history of mathematics}.
\newblock John Wiley \& Sons, Hoboken, New Jersey, 2011.

\bibitem{Mittelstaedt2004InterpretationOfQMAndMeasurementProcess}
P.~Mittelstaedt.
\newblock {\em The interpretation of quantum mechanics and the measurement
  process}.
\newblock Cambridge University Press, Cambridge, UK, 2004.

\bibitem{Petkov2013FromIllusionsToReality}
V.~Petkov.
\newblock {\em From Illusions to Reality: {T}ime, Spacetime and the Nature of
  Reality}.
\newblock Minkowski Institute Press, 2013.

\bibitem{Russell1921AnalysisOfMind}
B.~Russell.
\newblock {\em Analysis of mind}.
\newblock George Alien and Unwin Ltd; London; The Macmillan Company, New York,
  London, New York, 1921.

\bibitem{Russell1927AnalysisOfMatter}
B.~Russell.
\newblock {\em The analysis of matter}.
\newblock London: Kegan Paul, 1927.

\bibitem{Russell2021TheProblemsOfPhilosophy}
B.~Russell.
\newblock {\em The problems of philosophy}.
\newblock Minkowski Institute Press, 2021.

\bibitem{sep-phenomenology}
David~Woodruff Smith.
\newblock {Phenomenology}.
\newblock In Edward~N. Zalta, editor, {\em The {Stanford} Encyclopedia of
  Philosophy}. Metaphysics Research Lab, Stanford University, {S}ummer 2018
  edition, 2018.

\bibitem{Stoica2019RepresentationOfWavefunctionOn3D}
O.C. Stoica.
\newblock Representation of the wave function on the three-dimensional space.
\newblock {\em Phys. Rev. A}, 100:042115, 10 2019.


\bibitem{Stoica2022SpaceThePreferredBasisCannotUniquelyEmergeFromTheQuantumStructure}
O.C. Stoica.
\newblock 3d-space and the preferred basis cannot uniquely emerge from the
  quantum structure.
\newblock {\it To appear in Advances in Theoretical and Mathematical Physics.
  Preprint \href{http://arxiv.org/abs/2102.08620}{arXiv:2102.08620}}, 2021.
\bibitem{Stoica2021WhyTheWavefunctionAlreadyIsAnObjectOnSpace}
O.C. Stoica.
\newblock Why the wavefunction already is an object on space.
\newblock {\em Preprint
  \href{http://arxiv.org/abs/2111.14604}{arXiv:2111.14604}}, 2021.

\bibitem{Stoica2023AskingPhysicsAboutPhysicalismZombiesAndConsciousness}
O.C. Stoica.
\newblock Asking physics about physicalism, zombies, and consciousness.
\newblock {\em
  \href{http://philsci-archive.pitt.edu/23108/}{philsci-archive:23108}}, 2023.

\bibitem{Stoica2023PrinceAndPauperQuantumParadoxHilbertSpaceFundamentalism}
O.C. Stoica.
\newblock The prince and the pauper. {A} quantum paradox of {H}ilbert-space
  fundamentalism.
\newblock {\em Preprint
  \href{https://arxiv.org/abs/2310.15090}{arXiv:2310.15090}}, 2023.

\bibitem{Stoica2023TheRelationWavefunction3DSpaceMWILocalBeablesProbabilities}
O.C. Stoica.
\newblock The relation between wavefunction and {3D} space implies many worlds
  with local beables and probabilities.
\newblock {\em Quantum Reports}, 5(1):102--115, 2023.

\bibitem{Stoica2024DoesTheHamiltonianDetermineTheTPSAndThe3dSpace}
O.C. Stoica.
\newblock Does the {H}amiltonian determine the tensor product structure and the
  3d space?
\newblock {\em Preprint
  \href{https://arxiv.org/abs/2401.01793}{arXiv:2401.01793}}, 2024.

\bibitem{sep-physicalism}
Daniel Stoljar.
\newblock {Physicalism}.
\newblock In Edward~N. Zalta and Uri Nodelman, editors, {\em The {Stanford}
  Encyclopedia of Philosophy}. Metaphysics Research Lab, Stanford University,
  {S}ummer 2023 edition, 2023.

\bibitem{vonNeumann1932KoopmanMethod}
J.~v.~Neumann.
\newblock Zur {O}peratorenmethode in der klassischen {M}echanik.
\newblock {\em Ann. Math.}, pages 587--642, 1932.

\bibitem{Wallace2023TheSkyIsBlueAndOtherReasonsQuantumMechanicsIsNotUnderdeterminedByEvidence}
D.~Wallace.
\newblock The sky is blue, and other reasons quantum mechanics is not
  underdetermined by evidence.
\newblock {\em Eur. J. Philos. Sci.}, 13(4):54, 2023.

\bibitem{Weinberg2005QuantumTheoryOfFields}
S.~Weinberg.
\newblock {\em The quantum theory of fields}.
\newblock Cambridge University Press, Cabrige, UK, 2005.

\bibitem{Weyl2021SpaceTimeMatter}
H.~Weyl.
\newblock {\em Space-Time-Matter}.
\newblock Minkowski Institute Press, 2021.

\bibitem{Whitehead1978ProcessAndReality}
A.N. Whitehead.
\newblock {\em Process and Reality: {A}n Essay in Cosmology}.
\newblock Free Press, New York, 1978.

\bibitem{Wigner1959GroupTheoryAndItsApplicationToTheQuantumMechanicsOfAtomicSpectra}
E.P. Wigner.
\newblock {\em Group Theory and its Application to the Quantum Mechanics of
  Atomic Spectra}.
\newblock Academic Press, New York, 1959.

\bibitem{ZanardiLidarLloyd2004QuantumTensorProductStructuresAreObservableInduced}
P.~Zanardi, D.A. Lidar, and S.~Lloyd.
\newblock Quantum tensor product structures are observable induced.
\newblock {\em Phys. Rev. Lett.}, 92(6):060402, 2004.

\end{thebibliography}
\end{document}